\documentclass[conference]{IEEEtran}
\IEEEoverridecommandlockouts

\usepackage{cite}
\usepackage{amsmath,amssymb,amsfonts}
\usepackage{algorithm}
\usepackage{algorithmic}
\usepackage{graphicx}
\usepackage{textcomp}
\usepackage{xcolor}
\usepackage{booktabs}
\usepackage{multirow}
\usepackage{tikz}
\usetikzlibrary{arrows.meta,positioning,fit,backgrounds}
\usepackage{array}
\usepackage[caption=false,font=footnotesize]{subfig}

\usepackage{amsthm}
\def\BibTeX{{\rm B\kern-.05em{\sc i\kern-.025em b}\kern-.08em
    T\kern-.1667em\lower.7ex\hbox{E}\kern-.125emX}}

\newtheoremstyle{ieeeplain}{6pt}{6pt}{\itshape}{}{\bfseries}{.}{ }{}
\newtheoremstyle{ieeedef}{6pt}{6pt}{\upshape}{}{\bfseries}{.}{ }{}
\theoremstyle{ieeeplain}
\newtheorem{theorem}{Theorem}
\newtheorem{lemma}[theorem]{Lemma}
\newtheorem{proposition}[theorem]{Proposition}

\theoremstyle{ieeedef}
\newtheorem{definition}[theorem]{Definition}

\newtheorem*{example}{Example}

\newcommand{\method}{ExactCert}
\newcommand{\Qbox}{\mathcal{Q}}                 
\newcommand{\Gal}{\mathcal{G}}                  
\newcommand{\Win}{\mathcal{W}_{\tau}}            
\newcommand{\Crit}{\mathcal{S}}                 
\newcommand{\Dom}{\mathcal{B}}                  
\newcommand{\wmin}[3]{m_{#1}(#2,#3)}            
\newcommand{\dist}{d_{\tau}}
\newcommand{\gap}{\Gamma}
\newcommand{\Brel}{B^{\mathrm{rel}}}
\newcommand{\Uind}{U^{\mathrm{ind}}}
\newcommand{\xup}{x^{\uparrow}}

\begin{document}

\title{When Imputation Changes the Nearest Neighbor: Exact Certification of Similarity Retrieval for Incomplete Sensor Time Series}

\author{\IEEEauthorblockN{Jianru Shen}
\IEEEauthorblockA{\textit{Columbia University}\\
New York, NY, USA \\
js5929@columbia.edu}
}

\maketitle

\begin{abstract}
Sensor time series routinely reach similarity search with missing values, and the retrieved neighbours then depend on how the gaps were filled. We ask which retrieval conclusions hold for every admissible completion of an incomplete query. For a family of window-minimum distances that contains the L1 distance as its separable end we give a pairwise certificate that one candidate is strictly closer than another, a guaranteed candidate set that contains every possible nearest neighbour, and a certificate that a nearest-neighbour label cannot change. The guarantees hold uniformly over the specified box and apply to the true query whenever it lies inside. At the separable end a closed form is known, but beyond it the worst-case gap need not be attained at a corner of the box. We prove a finite reduction to critical values, give an exact interval dynamic programme for the worst-case ranking gap, and a quadratic-time fast layer of valid bounds and feasible witnesses, supported by an exact criterion for when the feature-box relaxation is tight. On three traffic-sensor collections with real gap patterns, linear interpolation changes the nearest neighbour in 15 to 33 percent of the queries. At the full band the feature-box relaxation misses up to 14 percent of the exactly stable boundary pairs, whereas fixed-completion and Monte Carlo checks issue false certificates. The fast layer resolves about 90 percent of all tested boundary pairs without dynamic programming, and on paired network-scale queries the exact method is 82 to 110 times faster than the programme alone. Label certificates transfer to three public benchmarks, and the band width trades forecasting accuracy and certification coverage for lower imputation sensitivity, with dataset-dependent crossovers.
\end{abstract}

\begin{IEEEkeywords}
similarity search, missing values, time series, certified retrieval, dynamic programming, sensor data
\end{IEEEkeywords}

\section{Introduction}\label{sec:intro}

A sensor produces one series per day, and a common way to use such data is to look up similar days. An operator asks which past afternoons looked like today on this loop detector, a forecaster averages the hours that followed those afternoons, and a classifier takes the label of the closest day. When today has gaps, practice fills them by some rule and then ranks the archive as if the data were complete. The ranking then belongs to the rule as much as to the data. On three public traffic-sensor collections with real gap patterns, filling the gaps by linear interpolation returns a nearest neighbour different from the one the true series would return for 15\% to 33\% of the queries, and the historical mean does so for up to 44\%.

This paper asks a question that does not depend on any rule. Given only bounds on the missing values, which retrieval conclusions hold no matter how the gaps are filled. We answer it for a family of window-minimum distances, which compare two series through the minimum of every time span up to a chosen width. One end of the family is the L1 distance, where every coordinate is compared on its own. The other end compares the deepest congestion in every span, a natural quantity for speed data. In between, a band width trades one for the other. We do not claim that this family beats L1 or warping distances. It carries a window semantics that some users want, and it is a simple nonseparable family in which exact certification is needed, because the worst completion need not lie at a corner of the box. The unit of every guarantee is the worst-case ranking gap of a pair of candidates over the box, certified when negative, and the two-slot example in Fig.~\ref{fig:overview} shows a worst-case gap not attained at a corner, which corner-only checks miss. Section~\ref{sec:experiments} shows where the augmented fixed-completion and Monte Carlo baselines fail.

\begin{figure*}[!t]
\centering
\definecolor{cblue}{HTML}{0072B2}
\definecolor{corange}{HTML}{E69F00}
\definecolor{cgrey}{HTML}{7F7F7F}
\begin{tikzpicture}[font=\scriptsize, >={Stealth[length=1.5mm, width=1.1mm]}, line width=0.55pt]
\node[anchor=west] at (-0.25,2.95) {\textbf{(a)}};
\draw[->, cgrey] (0.8,0.0) -- (0.8,2.5);
\draw[->, cgrey] (0.8,0.0) -- (4.2,0.0) node[anchor=west, inner sep=1pt, black] {$t$};
\node[anchor=south west, inner sep=1pt] at (0.55,2.55) {$|\mathcal{W}_{\tau}|\,F_{a,b}(2,t)$};
\foreach \t/\x in {0/0.8, 2/2.3, 4/3.8} { \draw[cgrey] (\x,0.0) -- (\x,0.07); \node[below, inner sep=1.5pt] at (\x,0.0) {\t}; }
\foreach \v/\y in {-3/0.1, 0/1.6, 1/2.1} { \draw[cgrey] (0.73,\y) -- (0.8,\y); \node[left, inner sep=1.5pt] at (0.73,\y) {\v}; }
\draw[dashed, cgrey] (0.8,1.6) -- (4.0,1.6);
\fill[cblue!18] (1.925,1.6) -- (2.3,2.1) -- (3.05,2.1) -- (3.425,1.6) -- cycle;
\draw[cblue, line width=1.1pt] (0.8,0.1) -- (2.3,2.1) -- (3.05,2.1) -- (3.8,1.1);
\fill[cblue] (0.8,0.1) circle (1.5pt);
\fill[cblue] (3.8,1.1) circle (1.5pt);
\node[cblue, anchor=south, inner sep=1pt] at (2.675,2.15) {$b$ closer};
\node[cgrey, anchor=west, inner sep=1pt] at (1.0,0.12) {$a$ closer};
\node[cgrey, anchor=north, inner sep=1pt] at (3.8,0.95) {$a$ closer};
\begin{scope}[xshift=5.2cm]
\node[anchor=west] at (-0.15,2.95) {\textbf{(b)}};
\tikzset{stage/.style={rounded corners=3pt, inner sep=3pt, text width=1.45cm, align=center, minimum height=0.7cm},
         cost/.style={cgrey, inner sep=1pt},
         lab/.style={inner sep=1pt}}
\node[stage, fill=black!6, draw=black!35]     (q)  at (0.8,1.2)  {query $\Qbox$ \\ pair $a$, $b$};
\node[stage, fill=cblue!12, draw=cblue!60]    (bd) at (3.2,1.2)  {bound \\ $\Brel$};
\node[stage, fill=corange!14, draw=corange!70](wt) at (5.9,1.2)  {witnesses \\ $z$};
\node[stage, fill=cblue!12, draw=cblue!60]    (dp) at (8.6,1.2)  {exact \\ programme};
\node[rounded corners=3pt, inner sep=3pt, text width=1.85cm, align=center, minimum height=2.5cm, fill=black!4, draw=black!40] (out) at (11.7,1.2)
      {\textbf{guarantees} \\[3pt] pair certificate \\[2pt] candidate set $P^{k}_{\Dom}$ \\[2pt] label invariance};
\node[stage, fill=cblue!12, draw=cblue, text width=1.25cm, minimum height=0.5cm]   (ce) at (3.2,2.3)  {certified};
\node[stage, fill=corange!14, draw=corange, text width=1.25cm, minimum height=0.5cm](re) at (5.9,-0.05) {refuted};
\node[cost, anchor=north east] at ([yshift=-1pt]bd.south east) (cbd) {$O(n^{2})$};
\node[cost, anchor=north west] at ([yshift=-1pt]wt.south west) (cwt) {$O(n^{2})$};
\node[cost, anchor=north] at ([yshift=-1pt]dp.south) (cdp) {$O(n^{3}m)$};
\begin{scope}[on background layer]
\node[fit=(bd)(wt)(ce)(re)(cbd)(cwt), rounded corners=5pt, inner xsep=6pt, inner ysep=9pt, fill=black!3, draw=black!20] (fast) {};
\node[fit=(dp)(cdp), rounded corners=5pt, inner xsep=6pt, inner ysep=9pt, fill=black!3, draw=black!20] (exact) {};
\end{scope}
\node[cgrey, anchor=north west, inner sep=2pt] at (fast.north west) {fast layer};
\node[cgrey, anchor=north west, inner sep=2pt] at (exact.north west) {exact layer};
\draw[->, black!55] (q) -- (bd);
\draw[->, black!55] (bd) -- node[lab, above=1pt] {else} (wt);
\draw[->, black!55] (wt) -- node[lab, above=1pt] {else} (dp);
\draw[->, cblue]    (bd) -- node[right=2pt, inner sep=0pt, cblue] {$\Brel<0$} (ce);
\draw[->, corange]  (wt) -- node[right=2pt, inner sep=0pt, corange, pos=0.55] {$F(z)\ge 0$} (re);
\draw[->, cblue]    (ce.east) -- (9.75,2.3) to[out=0, in=180] (out.west |- 0,1.8);
\draw[->, corange]  (re.east) -- (9.75,-0.05) to[out=0, in=180] (out.west |- 0,0.6);
\draw[->, black!55] (dp) -- node[lab, above=1pt] {exact sign} (out.west |- dp.east);
\end{scope}
\end{tikzpicture}
\caption{Overview. (a) The two-slot example of Section~\ref{sec:exact}. Filled circles are the two corners of the box, where $a$ is closer, but every completion with $t$ in the shaded range makes $b$ closer, so checking only the corners is insufficient. (b) \method{}. The fast layer certifies a pair from a cheap bound or refutes it from a few witness completions, the exact layer runs the dynamic programme on the pairs that remain, and every outcome is an exact decision of $\gap<0$. The pairwise verdicts assemble into the derived guarantees.}
\label{fig:overview}
\end{figure*}

Our goal is exact certification for a distance family with overlapping-window semantics, and the guarantees hold uniformly over the specified uncertainty box and apply to the true query whenever it lies in that box. In summary, this paper makes three contributions.
\begin{itemize}
\item \textbf{Exact worst-case analysis.} We prove a finite reduction to critical values and derive an exact interval dynamic programme that computes the worst-case gap in $O(n^{3}m)$ time and $O(n^{2})$ space for the value, where $m\le 4n$ is the number of critical values.
\item \textbf{Fast and complete pairwise certification.} We characterise exactly when the feature-box relaxation is tight and construct a witness completion in $O(n^{2})$ time. Combining the bound, witness checks and the programme gives \method{}, which decides whether $\gap(a,b)<0$ for every pair and supports sound candidate sets and sufficient label-invariance certificates.
\item \textbf{Evaluation on six datasets.} We quantify imputation-induced retrieval changes, certification quality, computational cost and band-width trade-offs on three traffic datasets with real gap patterns and three UCR benchmarks.
\end{itemize}

\section{Related Work}\label{sec:related}

\textbf{Uncertain objects and spatial domination.} For objects approximated by rectangles, a necessary and sufficient domination test under $L_{p}$ distances costs linear time in the dimension \cite{emrich2019domination}, and Proposition~\ref{prop:l1} is its instance for two point candidates and a box query under L1. That test relies on separability, and the present paper is about the distances where separability fails. For uncertain time series the usual semantics is probabilistic, with queries answered by probability thresholds or expected distances \cite{assfalg2009probabilistic,dallachiesa2012uncertain,dallachiesa2014topk}, whereas our guarantees hold for every completion of a box.

\textbf{Certain answers and certain predictions.} Certain answers over incomplete databases return what holds in every possible world \cite{imielinski1984incomplete}. Certain predictions carry the idea to nearest neighbour classifiers with uncertain training data \cite{karlas2021certain}, and Fan and Koutris \cite{fan2022certifiable} analyse robustness over subset repairs under functional dependencies, with a dichotomy between polynomial and coNP-hard cases. Both keep the test query fixed. Here the gallery is complete and one uncertain query is shared by all distances, with further coupling from the overlapping window minima.

\textbf{Min-max regret with interval data.} The worst-case gap $\gap(a,b)$ is the maximum regret of choosing $a$ over $b$ when the data lie in a box. Min-max regret optimisation studies such worst cases for combinatorial problems with interval costs \cite{kouvelis1997robust,aissi2009minmax}. For sum objectives the worst scenario sits at interval endpoints and several bottleneck problems are polynomial, while shortest paths and spanning trees become hard \cite{averbakh2004interval}. The example of Section~\ref{sec:exact} shows that endpoints do not suffice for window minima, and Theorem~\ref{thm:dp} shows that the problem is nevertheless polynomial on a path.

\textbf{Threshold decomposition and guaranteed mining under gaps.} Window minima are flat erosions, and the proof of Lemma~\ref{lem:finite} uses the fact that they commute with monotone maps, which is threshold decomposition in the theory of median and stack filters \cite{fitch1984threshold,wendt1986stack}. The mean absolute error of stack filters decomposes over thresholds in the same way \cite{coyle1988stack}. The new element here is the dynamic programme that handles the coupling between thresholds and windows. Guaranteed mining under missing data has been studied for motif discovery, where an admissible bound produces no false negatives \cite{zhu2021admissible}. Our candidate sets are admissible in the same sense for retrieval.

\textbf{Practice with missing sensor data.} Practitioners fill gaps by interpolation, carry forward, historical means, nearest neighbour imputation \cite{troyanskaya2001missing} or learned imputers such as BRITS and SAITS \cite{cao2018brits,du2023saits}, and then retrieve as if the data were complete. Similar-day retrieval itself is a long-standing nonparametric forecaster for traffic \cite{davis1991nonparametric,smith1997traffic}, and tuned dynamic time warping is the usual accuracy reference \cite{sakoe1978dynamic,ratanamahatana2005three}. The paper does not propose an imputer or a forecaster. It measures how much the retrieved answer depends on the imputer and certifies the part that does not.

\section{Problem and Guarantees}\label{sec:problem}

This section fixes the setting, the distance family and the three guarantees that the rest of the paper computes. Table~\ref{tab:notation} collects the main notation.

\begin{table}[t]
\caption{Main notation}
\label{tab:notation}
\centering
\scriptsize
\setlength{\tabcolsep}{3pt}
\begin{tabular}{@{}ll@{}}
\toprule
Symbol & Meaning \\
\midrule
$\Qbox=\prod_{t}[\ell_{t},u_{t}]$ & uncertainty box of the query, completions $x\in\Qbox$ \\
$\Win$, $\wmin{x}{i}{j}$ & windows with $j-i\le\tau$ and the window minimum \\
$\dist(x,a)$, $F_{a,b}(x)$ & window-minimum distance and the ranking gap \\
$\gap(a,b)$ & worst-case gap over $\Qbox$, stable when negative \\
$\alpha_{ij}$, $\beta_{ij}$, $\varphi_{ij}$ & candidate window minima and the per-window term \\
$\Crit$, $s_{1}<\dots<s_{m}$ & critical values, $m\le 4n$ \\
$\underline{z}_{ij}$, $\overline{z}_{ij}$ & feature box of window $(i,j)$ \\
$\Uind$, $\Brel$, $\xup$ & independent bound, relaxation bound, witness \\
$\Dom$, $P^{k}_{\Dom}$ & dominators and the guaranteed candidate set \\
\bottomrule
\end{tabular}
\end{table}

\subsection{Setting}

A query is a sensor series of length $n$ with missing slots, and a gallery $\Gal$ holds $N$ complete series of the same length, for example the earlier days of the same sensor. We ask which conclusions of nearest-neighbour retrieval hold for every admissible way of filling the gaps.

\begin{definition}[Uncertain query and completions]\label{def:box}
An uncertain query is a box $\Qbox=\prod_{t=1}^{n}[\ell_{t},u_{t}]$ with $\ell_{t}\le u_{t}$. An observed slot has $\ell_{t}=u_{t}=x_{t}$ and a missing slot has a nontrivial interval. A completion is any $x\in\Qbox$.
\end{definition}

The box is not a probability model. No completion is preferred, no smoothness across slots is assumed, and every guarantee below holds for all completions at once and is conditional on the true series lying in $\Qbox$.

\subsection{Distance family}

\begin{definition}[Window-minimum distance]\label{def:distance}
For a band width $\tau\in\{0,\dots,n-1\}$ let $\Win$ be the set of windows $(i,j)$ with $1\le i\le j\le n$ and $j-i\le\tau$, and let $\wmin{x}{i}{j}=\min_{i\le t\le j}x_{t}$. The distance between a query $x$ and a candidate $a$ is
\begin{equation}
\dist(x,a)=\frac{1}{|\Win|}\sum_{(i,j)\in\Win}\bigl|\wmin{x}{i}{j}-\wmin{a}{i}{j}\bigr| .
\label{eq:distance}
\end{equation}
\end{definition}

The band width interpolates between two familiar ends. At $\tau=0$ the windows are the single slots and \eqref{eq:distance} is the L1 distance divided by $n$. At $\tau=n-1$ every window counts and the distance compares the full multiscale profile of minima, which for speed data is the deepest congestion inside every time span. The normalisation by $|\Win|$ does not affect any ranking or sign below and is dropped in the implementation, which works on integer codes.

\subsection{Guarantees}

For two candidates $a$ and $b$ write $\alpha_{ij}=\wmin{a}{i}{j}$, $\beta_{ij}=\wmin{b}{i}{j}$ and $\varphi_{ij}(v)=|v-\alpha_{ij}|-|v-\beta_{ij}|$, so that
\begin{equation}
F_{a,b}(x)=\dist(x,a)-\dist(x,b)=\frac{1}{|\Win|}\sum_{(i,j)\in\Win}\varphi_{ij}\bigl(\wmin{x}{i}{j}\bigr).
\label{eq:gap}
\end{equation}

\begin{definition}[Certificates]\label{def:certificates}
The worst-case ranking gap of the ordered pair $(a,b)$ over $\Qbox$ is
\begin{equation}
\gap(a,b)=\max_{x\in\Qbox}F_{a,b}(x).
\label{eq:Gamma}
\end{equation}
The pair is stable when $\gap(a,b)<0$, since then $a$ is strictly closer than $b$ for every completion. When $\gap(a,b)=0$ the candidate $a$ is never farther but some completion ties them, and when $\gap(a,b)>0$ some completion makes $b$ closer. Candidate $b$ dominates candidate $a$ when $\gap(b,a)<0$. For a set of dominators $\Dom\subseteq\Gal$ and an integer $k\ge 1$ the guaranteed candidate set is
\begin{equation}
P^{k}_{\Dom}=\bigl\{\,a\in\Gal:\ |\{b\in\Dom\setminus\{a\}:\gap(b,a)<0\}|<k\,\bigr\}.
\label{eq:candidate-set}
\end{equation}
When candidates carry labels, the label $y$ is certified invariant if some candidate $a$ with label $y$ satisfies $\gap(a,b)<0$ for every candidate $b$ whose label differs from $y$.
\end{definition}

The maximum in \eqref{eq:Gamma} exists because $F_{a,b}$ is continuous and $\Qbox$ is compact. Two consequences are used throughout. Any feasible completion $z\in\Qbox$ gives the lower bound $\gap(a,b)\ge F_{a,b}(z)$, so a single completion with $F_{a,b}(z)\ge 0$ refutes stability. Any upper bound on $\gap(a,b)$ that is negative certifies stability. Algorithms that only ever certify from valid upper bounds are sound, and an algorithm that also decides exactly whether $\gap<0$ is complete.

The guarantees rest on four assumptions. A1, every candidate in $\Gal$ is complete. A2, the box is bounded and closed and contains the true query. A3, the band width $\tau$ is fixed and nothing is renormalised after completion. A4, slots are compared at fixed time positions, so no time warping is allowed. A1 and A4 are limitations that Section~\ref{sec:limitations} returns to.

\subsection{The separable end and the two derived guarantees}

\begin{proposition}[Closed form at $\tau=0$]\label{prop:l1}
For $\tau=0$,
\begin{equation}
\gap(a,b)=\frac{1}{n}\sum_{t=1}^{n}\max\{\varphi_{tt}(\ell_{t}),\varphi_{tt}(u_{t})\},
\label{eq:l1-closed}
\end{equation}
which is computable in $O(n)$ time.
\end{proposition}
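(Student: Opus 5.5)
At $\tau=0$ the window set $\Win$ consists only of the diagonal windows $(t,t)$, so $|\Win|=n$ and $\wmin{x}{t}{t}=x_t$. The plan is to use this to make the objective separable and then reduce the maximisation over the box to $n$ independent one-dimensional problems.

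Substituting into \eqref{eq:gap} gives
\begin{equation*}
F_{a,b}(x)=\frac{1}{n}\sum_{t=1}^{n}\varphi_{tt}(x_t).
\end{equation*}
Each summand depends on a single coordinate. Because $\Qbox=\prod_t[\ell_t,u_t]$ is a product set, the maximum of a sum of functions of distinct coordinates equals the sum of the coordinatewise maxima. For $\le$, bound each term by its own maximum. For $\ge$, choose each $x_t$ to be a maximiser on $[\ell_t,u_t]$; this choice lies in $\Qbox$. Hence
\begin{equation*}
\gap(a,b)=\frac{1}{n}\sum_{t=1}^{n}\max_{v\in[\ell_t,u_t]}\varphi_{tt}(v).
\end{equation*}

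The remaining step is the one-dimensional claim that $\varphi(v)=|v-\alpha|-|v-\beta|$ attains its maximum over a closed interval at an endpoint. I will argue this from convexity on pieces. The function $\varphi$ is piecewise linear with breakpoints only at $\alpha$ and $\beta$. It is constant outside the interval between them and linear (slope $\pm2$) between them, so it is monotone on $\mathbb{R}$: nondecreasing if $\alpha\le\beta$ and nonincreasing otherwise. One can also check this directly from the three-piece formula. A monotone function on $[\ell_t,u_t]$ is maximised at $\ell_t$ or $u_t$, which gives $\max\{\varphi_{tt}(\ell_t),\varphi_{tt}(u_t)\}$. Observed slots with $\ell_t=u_t$ are covered trivially.

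Complexity is immediate: each of the $n$ terms needs two evaluations of $\varphi_{tt}$ in $O(1)$, so the total is $O(n)$. The only step needing care is the monotonicity of $\varphi$. It is elementary, so I expect no real obstacle. The essential point is that separability lets the worst case be assembled coordinatewise, which is exactly what fails for $\tau>0$.
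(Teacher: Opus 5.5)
Your proof is correct and follows the same route as the paper. At $\tau=0$ the gap separates into the one-coordinate terms $\varphi_{tt}(x_t)$, each monotone in $x_t$ (nondecreasing when $a_t\le b_t$, nonincreasing otherwise) and hence maximised at $\ell_t$ or $u_t$, with the coordinates chosen independently over the product box; the only slip is the phrase ``convexity on pieces'', since what you actually use, and correctly verify, is monotonicity.
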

\begin{proof}
At $\tau=0$ the gap \eqref{eq:gap} is the sum of the terms $\varphi_{tt}(x_{t})$, each depending on one coordinate. The function $\varphi_{tt}(v)=|v-a_{t}|-|v-b_{t}|$ is piecewise linear with breakpoints at $a_{t}$ and $b_{t}$, constant outside the interval between them and monotone inside it, nondecreasing when $a_{t}\le b_{t}$ and nonincreasing otherwise. A monotone function on an interval attains its maximum at an endpoint, and the coordinates can be chosen independently.
\end{proof}

The same endpoint argument covers the squared Euclidean distance, whose gap is linear in each coordinate. Coordinate-wise reasoning of this kind underlies domination tests for uncertain objects under such distances \cite{emrich2019domination}. Nothing of the sort holds for $\tau>0$, where a single coordinate enters up to $(\tau+1)(\tau+2)/2$ overlapping windows and the maximum of \eqref{eq:gap} need not be attained at a corner of the box. Section~\ref{sec:exact} starts from such an example.

\begin{proposition}[Guaranteed candidate sets]\label{prop:candidate}
Let $x\in\Qbox$ and let $T\subseteq\Gal$ be any top-$k$ set of $x$, that is, a set of $k$ candidates such that no candidate outside $T$ is strictly closer to $x$ than a candidate inside $T$. Then $T\subseteq P^{k}_{\Dom}$ for every $\Dom\subseteq\Gal$. The inclusion still holds when $\gap$ in \eqref{eq:candidate-set} is replaced by any upper bound on $\gap$.
\end{proposition}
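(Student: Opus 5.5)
The argument is a counting one: every member of $T$ can have fewer than $k$ certified dominators. Fix $x\in\Qbox$, a top-$k$ set $T$ of $x$, a dominator set $\Dom\subseteq\Gal$, and some $a\in T$. We need $|\{b\in\Dom\setminus\{a\}:\gap(b,a)<0\}|<k$.

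The first step is the key observation. If $\gap(b,a)<0$, then by Definition~\ref{def:certificates} we have $F_{b,a}(z)\le\gap(b,a)<0$ for every $z\in\Qbox$. In particular this holds at the given $x$, so $\dist(x,b)<\dist(x,a)$ and $b$ is strictly closer to $x$ than $a$.

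The second step uses the top-$k$ property of $T$. No candidate outside $T$ is strictly closer to $x$ than a member of $T$. Since $a\in T$, every $b$ that is strictly closer to $x$ than $a$ must therefore lie in $T$. Such a $b$ is also different from $a$, because $\dist(x,a)<\dist(x,a)$ is impossible. So the set of dominators of $a$ is contained in $T\setminus\{a\}$, whose size is $k-1$. This gives a count strictly below $k$ and hence $a\in P^{k}_{\Dom}$.

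The third step handles the replacement of $\gap$ by an upper bound $U$ with $U(b,a)\ge\gap(b,a)$. Then $U(b,a)<0$ implies $\gap(b,a)<0$. The dominator set defined through $U$ is thus a subset of the one defined through $\gap$, so its size is also below $k$.

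There is no real obstacle here. The only points needing care are that $T$ has exactly $k$ elements, which the statement supplies, and that strict closeness excludes $b=a$ and does not depend on how ties inside $T$ are broken.
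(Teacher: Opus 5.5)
Your proof is correct and follows the paper's argument essentially step for step. Both show that $\gap(b,a)<0$ forces $b$ to be strictly closer at the given $x$, that the top-$k$ property confines such dominators to $T\setminus\{a\}$ (fewer than $k$ of them), and that an upper bound on $\gap$ can only remove certified dominations.
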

\begin{proof}
Let $a\in T$. Every candidate strictly closer to $x$ than $a$ belongs to $T$, so at most $k-1$ candidates are strictly closer than $a$. If $\gap(b,a)<0$ then $b$ is strictly closer than $a$ under every completion, in particular under $x$. Hence fewer than $k$ dominators satisfy $\gap(b,a)<0$ and $a\in P^{k}_{\Dom}$. Replacing $\gap$ by an upper bound can only remove certified dominations, which enlarges the set.
\end{proof}

For $k=1$ the set $P^{1}_{\Dom}$ contains the true nearest neighbour of every completion, and $|P^{1}_{\Dom}|=1$ certifies the nearest neighbour outright. A test that declares dominations that do not hold, for instance because it inspects only a sample of completions, can exclude the true nearest neighbour. Section~\ref{sec:experiments} measures how often this happens.

\begin{proposition}[Label invariance]\label{prop:label}
If the label $y$ is certified invariant in the sense of Definition~\ref{def:certificates}, then for every completion $x\in\Qbox$ every nearest neighbour of $x$ in $\Gal$ carries the label $y$.
\end{proposition}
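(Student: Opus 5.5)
The argument fixes a completion and rules out every differently labelled candidate as a nearest neighbour by a direct comparison with the certifying witness. Let $y$ be certified invariant, and let $a\in\Gal$ be the witness from Definition~\ref{def:certificates}. Thus $a$ carries label $y$ and $\gap(a,b)<0$ for every $b\in\Gal$ whose label differs from $y$. Fix an arbitrary completion $x\in\Qbox$, and let $c$ be any nearest neighbour of $x$ in $\Gal$. By the definition of a nearest neighbour, $\dist(x,c)\le\dist(x,a)$.

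We argue by contradiction and suppose that the label of $c$ is not $y$. The certificate then applies to the pair $(a,c)$ and gives $\gap(a,c)<0$. By \eqref{eq:Gamma}, $F_{a,c}(x)\le\gap(a,c)<0$, because $x\in\Qbox$. By \eqref{eq:gap} this means $\dist(x,a)<\dist(x,c)$. This contradicts $\dist(x,c)\le\dist(x,a)$, so $c$ carries label $y$. Since $x$ and $c$ were arbitrary, the statement follows.

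The argument also covers ties and the case $c=a$. If $c=a$, its label is $y$ trivially. If several nearest neighbours tie, each one is handled by the same contradiction. Nothing needs to be shown about existence: $\Gal$ is finite, so a nearest neighbour always exists. The only step that needs care is using the strict inequality $\gap<0$ rather than $\le 0$, because that strictness is what excludes a differently labelled candidate tying with $a$. As after Proposition~\ref{prop:candidate}, the same proof works when $\gap$ is replaced by any valid upper bound, such as $\Brel$, because such a bound still dominates $F_{a,c}(x)$.
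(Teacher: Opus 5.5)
Your proof is correct and follows the paper's argument: fix a completion $x\in\Qbox$ and a nearest neighbour $c$, and if its label differed from $y$, the bound $F_{a,c}(x)\le\gap(a,c)<0$ would give $\dist(x,a)<\dist(x,c)$, contradicting the choice of $c$. Your added remarks on ties, on existence of a nearest neighbour, and on replacing $\gap$ by a valid upper bound are all sound.
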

\begin{proof}
Fix $x\in\Qbox$ and let $g$ be a nearest neighbour of $x$. If the label of $g$ differed from $y$, the certifying candidate $a$ would satisfy $\dist(x,a)<\dist(x,g)$, contradicting the choice of $g$.
\end{proof}

The certificate says that the prediction cannot change, not that it is correct, and it is sufficient rather than necessary. It connects the pairwise gap to certain predictions for nearest-neighbour classifiers over incomplete data \cite{karlas2021certain}, where the training tuples are uncertain and the query is fixed. Here the query is uncertain, which couples the distances to all candidates through one completion and couples the coordinates through the window minima.

\section{Exact Certification}\label{sec:exact}

Two facts drive this section. For $\tau>0$ the worst completion need not sit at a corner of the box, so endpoint checks are not enough, and yet only finitely many completions matter, which is what makes an exact algorithm possible.

\begin{example}
Take $n=2$, $\tau=1$, a query $x=(2,t)$ with $t\in[0,4]$, and candidates $a=(0,4)$ and $b=(5,3)$. The three windows give $|\Win|\,F_{a,b}(2,t)=-1+\varphi_{22}(t)+\varphi_{12}(\min\{2,t\})$, which equals $-3$ at $t=0$, $1$ for every $t\in[2,3]$ and $-1$ at $t=4$. Both endpoints of the box make $a$ closer, but the completions with $t\in[2,3]$ make $b$ closer, so $\gap(a,b)=1/3$ and the pair is not stable. The coupling comes from the window $(1,2)$, whose minimum saturates at $2$ once $t\ge 2$ while the window $(2,2)$ keeps moving.
\end{example}

\subsection{Finite reduction}

Let $\Crit=\{\ell_{t},u_{t},a_{t},b_{t}:t\in[n]\}$ be the set of critical values and let $m=|\Crit|\le 4n$, with sorted elements $s_{1}<\dots<s_{m}$.

\begin{lemma}[Finite reduction]\label{lem:finite}
There is a maximiser $x^{\ast}$ of $F_{a,b}$ over $\Qbox$ with $x^{\ast}_{t}\in\Crit$ for every $t\in[n]$.
\end{lemma}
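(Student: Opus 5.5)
Start from any maximiser $x$ of $F_{a,b}$ over $\Qbox$; one exists by compactness, as noted after Definition~\ref{def:certificates}. We then move it onto the critical grid without lowering $F_{a,b}$. The tool is the fact, cited in Section~\ref{sec:related}, that window minima commute with nondecreasing maps: for any nondecreasing $g:\mathbb{R}\to\mathbb{R}$ and every window, $\min_{i\le t\le j} g(x_t)=g(\wmin{x}{i}{j})$.

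Since $\Crit$ is finite, the real line splits into closed cells $[s_k,s_{k+1}]$ together with the unbounded pieces below $s_1$ and above $s_m$. The pieces outside $[s_1,s_m]$ can be ignored, because every $x_t$ lies in $[\ell_t,u_t]\subseteq[s_1,s_m]$. Each per-window function $\varphi_{ij}$ has breakpoints only at $\alpha_{ij}$ and $\beta_{ij}$. These are window minima of $a$ and $b$, so they are themselves entries of $a$ or $b$ and belong to $\Crit$. Hence every $\varphi_{ij}$ is affine on each cell $[s_k,s_{k+1}]$.

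Fix a cell $I=[s_k,s_{k+1}]$ and a parameter $\lambda\in[0,1]$. Define $g_\lambda$ to be the identity outside the open cell. Inside it, $g_\lambda$ maps $v$ to $s_k$ if $v<s_k+\lambda(s_{k+1}-s_k)$ and to $s_{k+1}$ otherwise. This map is nondecreasing, so the commuting fact applies to it.

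Now replace $x$ by $x^\lambda=g_\lambda(x)$ coordinatewise. Feasibility holds because every bound $\ell_t$ and $u_t$ lies in $\Crit$: a coordinate strictly inside a cell has both cell endpoints in $[\ell_t,u_t]$. By the commuting fact, $\wmin{x^\lambda}{i}{j}=g_\lambda(\wmin{x}{i}{j})$.

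Next we average over $\lambda$. If $\lambda$ is drawn uniformly from $[0,1]$, then a value $v\in I$ goes to $s_{k+1}$ with probability $(v-s_k)/(s_{k+1}-s_k)$. The expectation of $g_\lambda(v)$ is therefore exactly $v$. Because $\varphi_{ij}$ is affine on $I$, this gives $\mathbb{E}_\lambda\,\varphi_{ij}(g_\lambda(v))=\varphi_{ij}(v)$. Values outside the cell are untouched. Summing over windows yields $\mathbb{E}_\lambda F_{a,b}(x^\lambda)=F_{a,b}(x)$.

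So some $\lambda$ satisfies $F_{a,b}(x^\lambda)\ge F_{a,b}(x)$, and $x^\lambda$ is again a maximiser. It has no coordinate strictly inside $I$, and the cells it previously cleared stay cleared, since $g_\lambda$ fixes every point outside the open cell. Repeating this over all $m-1$ cells produces $x^\ast$ with every $x^\ast_t\in\Crit$.

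The delicate step is the randomized-threshold rounding. Pushing all coordinates in a cell together with one common threshold is essential. Rounding coordinates independently would not commute with the window minima, and the common threshold is what keeps the window minima consistent and the map monotone. One must also confirm that the breakpoints $\alpha_{ij},\beta_{ij}$ lie in $\Crit$, so that the $\varphi_{ij}$ are affine on each cell. Instead of an expectation, one can give a deterministic argument: $\lambda\mapsto F_{a,b}(x^\lambda)$ is a step function whose average equals $F_{a,b}(x)$, so some step attains at least that average.
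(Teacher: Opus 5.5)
Your proposal is correct and matches the paper's proof essentially step for step. Both round all coordinates lying strictly inside a cell $(s_{k},s_{k+1})$ with one common uniformly random threshold, use monotonicity of the rounding map so it commutes with window minima, use affinity of each $\varphi_{ij}$ on the cell to preserve the expectation of $F_{a,b}$, and then iterate over the $m-1$ cells starting from a maximiser that exists by compactness.
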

\begin{proof}
Let $x\in\Qbox$ and suppose some coordinates lie strictly between two consecutive critical values, $s_{k}<x_{t}<s_{k+1}$. Since $\ell_{t}$ and $u_{t}$ belong to $\Crit$ and $\ell_{t}\le x_{t}\le u_{t}$, we have $\ell_{t}\le s_{k}$ and $u_{t}\ge s_{k+1}$, so the whole interval $[s_{k},s_{k+1}]$ is feasible for each such coordinate. Draw $\theta$ uniformly from $(s_{k},s_{k+1})$ and let $\psi_{\theta}$ map a value $v$ to $s_{k}$ if $s_{k}<v<\theta$, to $s_{k+1}$ if $\theta\le v<s_{k+1}$, and to $v$ otherwise. The map $\psi_{\theta}$ is nondecreasing on $\mathbb{R}$, so it commutes with minima, $\wmin{\psi_{\theta}(x)}{i}{j}=\psi_{\theta}(\wmin{x}{i}{j})$ for every window, and $\psi_{\theta}(x)\in\Qbox$.

Every breakpoint $\alpha_{ij}$ or $\beta_{ij}$ of a term $\varphi_{ij}$ is a window minimum of $a$ or $b$ and hence a coordinate of $a$ or $b$, so it lies in $\Crit$ and $\varphi_{ij}$ is affine on $[s_{k},s_{k+1}]$. For a value $v$ in the open interval, $\psi_{\theta}(v)=s_{k+1}$ with probability $(v-s_{k})/(s_{k+1}-s_{k})$ and $\psi_{\theta}(v)=s_{k}$ otherwise, so $\mathbb{E}[\psi_{\theta}(v)]=v$ and, by affinity, $\mathbb{E}[\varphi_{ij}(\psi_{\theta}(v))]=\varphi_{ij}(v)$. For $v$ outside the open interval $\psi_{\theta}(v)=v$. Summing over windows gives $\mathbb{E}[F_{a,b}(\psi_{\theta}(x))]=F_{a,b}(x)$, so some $\theta$ satisfies $F_{a,b}(\psi_{\theta}(x))\ge F_{a,b}(x)$. The rounded completion has no coordinate in $(s_{k},s_{k+1})$ and no coordinate has left $\Crit$. Repeating the step for each of the $m-1$ open intervals produces a completion on the grid $\Crit^{n}$ whose value is at least $F_{a,b}(x)$. Applying this to a maximiser, which exists by compactness, proves the claim.
\end{proof}

The argument needs only that window minima commute with nondecreasing maps and that every $\varphi_{ij}$ is piecewise linear with breakpoints in $\Crit$, so it also covers window maxima and other order statistics. The dynamic programme below is the part that needs minima over contiguous windows.

\subsection{Exact dynamic programme}

By Lemma~\ref{lem:finite} the coordinates may be restricted to $\Crit$. The programme scans the critical values from the largest downwards and, inside each interval of positions, decides where the minimum sits. For $1\le l\le r\le n$ and $1\le k\le m$ let $D(l,r,k)$ be the largest value of
\begin{equation}
\sum_{(i,j)\in\Win,\ l\le i\le j\le r}\varphi_{ij}\bigl(\wmin{x}{i}{j}\bigr)
\label{eq:local-objective}
\end{equation}
over assignments of the coordinates $x_{l},\dots,x_{r}$ with $x_{t}\in\Crit\cap[\ell_{t},u_{t}]$ and $x_{t}\ge s_{k}$ for every $t$, with $D(l,r,k)=-\infty$ when no such assignment exists, $D(l,l-1,k)=0$ for the empty interval and $D(l,r,m+1)=-\infty$. Let
\begin{equation}
C(l,r,p,k)=\sum_{(i,j)\in\Win,\ l\le i\le p\le j\le r}\varphi_{ij}(s_{k})
\label{eq:crossing}
\end{equation}
be the contribution of the windows inside $[l,r]$ that contain the position $p$ when all of them have minimum $s_{k}$. Let $A(l,r,k)=\{p:\ l\le p\le r,\ \ell_{p}\le s_{k}\le u_{p}\}$ be the positions where the value $s_{k}$ is admissible.

\begin{theorem}[Exact worst-case gap]\label{thm:dp}
For all $1\le l\le r\le n$ and $1\le k\le m$,
\begin{align}
D(l,r,k)&=\max\{D(l,r,k+1),\,E(l,r,k)\},\label{eq:dp-D}\\
E(l,r,k)&=\max_{p\in A(l,r,k)}\bigl[D(l,p-1,k)+D(p+1,r,k)\nonumber\\
&\qquad\qquad\qquad\ +C(l,r,p,k)\bigr],\label{eq:dp-E}
\end{align}
with $E(l,r,k)=-\infty$ when no position $p$ is admissible, and $|\Win|\,\gap(a,b)=D(1,n,1)$. The values can be computed in $O(n^{3}m)$ time and $O(n^{2})$ space.
\end{theorem}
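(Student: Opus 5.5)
The plan is to prove the recurrence by induction on the interval length $r-l$ and, inside a fixed interval, by downward induction on $k$. The base cases are $D(l,r,m+1)=-\infty$ and $D(l,l-1,k)=0$. The key observation is a split of the feasible set for $D(l,r,k)$. Its assignments use coordinates in $\Crit\cap[\ell_t,u_t]$ with all $x_t\ge s_k$, so they fall into two disjoint classes: those with all $x_t\ge s_{k+1}$, whose optimum is $D(l,r,k+1)$ by definition, and those whose minimum over $[l,r]$ equals $s_k$ exactly. Because the coordinates lie in $\Crit$, no value sits strictly between $s_k$ and $s_{k+1}$. Hence $D(l,r,k)=\max\{D(l,r,k+1),E^\ast(l,r,k)\}$, where $E^\ast$ is the optimum over the second class. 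The remaining task is to show $E^\ast=E$.

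For $E^\ast\le E$, take an assignment in the second class and let $p$ be any position with $x_p=s_k$; choosing the leftmost is convenient but not needed. Then $p\in A(l,r,k)$, since $s_k=x_p\in[\ell_p,u_p]$. Partition the windows $(i,j)\in\Win$ with $l\le i\le j\le r$ into three groups:
\begin{itemize}
\item windows inside $[l,p-1]$;
\item windows inside $[p+1,r]$;
\item windows containing $p$.
\end{itemize}
Every window containing $p$ has minimum exactly $s_k$, because all coordinates are $\ge s_k$ and $x_p=s_k$. Their total contribution is therefore $C(l,r,p,k)$. The restrictions of $x$ to $[l,p-1]$ and to $[p+1,r]$ are feasible for $D(l,p-1,k)$ and $D(p+1,r,k)$: they keep the box constraints, the grid constraint and $x_t\ge s_k$. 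Their window sums are exactly the local objectives \eqref{eq:local-objective} on those subintervals, since $\Win$ restricted to a subinterval is closed under this partition. Those two sums are therefore bounded by the corresponding $D$ values.

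For $E\ge$ direction (achievability), pick $p\in A(l,r,k)$ together with optimal assignments on the two sides, and glue them with $x_p=s_k$. The glued assignment is feasible, its minimum over $[l,r]$ is $s_k$, and by the same three-way partition its objective equals the bracket in \eqref{eq:dp-E}. If either side is $-\infty$, that term is simply not attained, which is consistent with the convention. The induction is well founded because $D(l,p-1,k)$ and $D(p+1,r,k)$ are on strictly shorter intervals, while $D(l,r,k+1)$ has the same interval and a larger $k$.

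The final identity follows from Lemma~\ref{lem:finite}. Some maximiser lies on $\Crit^n\cap\Qbox$, and the constraint $x_t\ge s_1$ is vacuous, so $D(1,n,1)$ equals $\max F_{a,b}$ multiplied by $|\Win|$.

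For the complexity, there are $O(n^2m)$ states and each evaluates $O(n)$ choices of $p$, giving $O(n^3m)$, provided $C(l,r,p,k)$ costs $O(1)$. This is the step I expect to need care. My first attempt is to loop $k$ outermost and, for each $k$, precompute the window values $\varphi_{ij}(s_k)$ on $\Win$ in $O(n^2)$. Two-dimensional prefix sums over $(i,j)$ then give $C(l,r,p,k)$ as a rectangle sum over $i\in[l,p]$ and $j\in[p,\min(r,i+\tau)]$. Zeroing entries outside the band makes this a plain rectangle query, so it is $O(1)$ after $O(n^2)$ preprocessing per $k$, for $O(n^2m)$ total. To get $O(n^2)$ space, process $k$ from $m$ down to $1$ and keep only the tables $D(\cdot,\cdot,k+1)$ and $D(\cdot,\cdot,k)$ together with the current prefix array. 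Within a fixed $k$, fill intervals in increasing length, since \eqref{eq:dp-E} only refers to shorter intervals at the same $k$.
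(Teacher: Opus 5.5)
Your proposal is correct and follows the paper's proof essentially step for step. It uses the same split of the level-$k$ assignments by whether the minimum over $[l,r]$ is exactly $s_k$, the same partition of windows around a position $p$ with $x_p=s_k$ together with the gluing converse, Lemma~\ref{lem:finite} for $D(1,n,1)=|\Win|\,\gap(a,b)$, and per-level two-dimensional prefix sums with two rolling tables for the $O(n^{3}m)$ time and $O(n^{2})$ space bounds.
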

\begin{proof}
Fix $(l,r,k)$ and split the admissible assignments by their minimum over $[l,r]$. If the minimum exceeds $s_{k}$, every coordinate is at least $s_{k+1}$ and the assignment is admissible for level $k+1$, so the best such value is $D(l,r,k+1)$. Otherwise the minimum equals $s_{k}$ and is attained at some position $p$, which requires $\ell_{p}\le s_{k}\le u_{p}$. Every window inside $[l,r]$ that contains $p$ then has minimum exactly $s_{k}$, because all coordinates are at least $s_{k}$ and $x_{p}=s_{k}$, and these windows contribute $C(l,r,p,k)$. The remaining windows lie inside $[l,p-1]$ or inside $[p+1,r]$, and the coordinates there form admissible assignments for the same level $k$, since they are at least $s_{k}$ and may equal it. Conversely, any two admissible assignments of $[l,p-1]$ and $[p+1,r]$ at level $k$ together with $x_{p}=s_{k}$ form an admissible assignment of $[l,r]$ whose value is the sum in \eqref{eq:dp-E}. Induction on $r-l$ gives \eqref{eq:dp-D} and \eqref{eq:dp-E}, and $D(1,n,1)$ is the maximum of \eqref{eq:local-objective} over the whole grid, which equals $|\Win|\,\gap(a,b)$ by Lemma~\ref{lem:finite}.

For the cost, fix a level $k$. Two-dimensional prefix sums of the array $\varphi_{ij}(s_{k})$ over the band cost $O(n^{2})$ and afterwards return each $C(l,r,p,k)$ in constant time, so the $O(n^{2})$ intervals and the $O(n)$ positions per interval cost $O(n^{3})$ per level and $O(n^{3}m)$ over the $m$ levels. Level $k$ depends only on levels $k$ and $k+1$, so two $n\times n$ tables suffice when only the value is needed.
\end{proof}

Two remarks. The recurrence lets both sides of the split reach the value $s_{k}$ again, so ties among coordinates need no perturbation, and recovering a worst completion rather than its value raises the space to $O(n^{2}m)$. With integer codes every quantity in the programme is an integer and the sign of $\gap(a,b)$ is decided exactly, which is the property the certificates in Definition~\ref{def:certificates} need.

The programme is polynomial but not cheap, at about a millisecond per pair for $n=48$ and forty milliseconds for $n=144$ in our implementation. Most pairs do not need it. The next section develops a quadratic-time fast layer and characterises exactly when the feature-box relaxation is tight.

\section{Fast Layer and Algorithm}\label{sec:fast}

For every completion $x\in\Qbox$ and every window $(i,j)$ the minimum $\wmin{x}{i}{j}$ lies in the feature box $[\underline{z}_{ij},\overline{z}_{ij}]=[\wmin{\ell}{i}{j},\wmin{u}{i}{j}]$, and the two ends are attained by $\ell$ and $u$. Relaxing the coupling between windows gives two upper bounds on $\gap(a,b)$ that cost $O(|\Win|)$ time each. The independent bound treats the two distances separately,
\begin{equation}
\Uind=\frac{1}{|\Win|}\sum_{(i,j)\in\Win}\Bigl[\max_{v\in[\underline{z}_{ij},\overline{z}_{ij}]}|v-\alpha_{ij}|-\min_{v\in[\underline{z}_{ij},\overline{z}_{ij}]}|v-\beta_{ij}|\Bigr],
\label{eq:uind}
\end{equation}
and the relaxation bound maximises every term of \eqref{eq:gap} on its own,
\begin{equation}
\Brel=\frac{1}{|\Win|}\sum_{(i,j)\in\Win}\max_{v\in[\underline{z}_{ij},\overline{z}_{ij}]}\varphi_{ij}(v).
\label{eq:brel}
\end{equation}
Each maximum in \eqref{eq:brel} is attained at an end of the feature box or at a breakpoint $\alpha_{ij}$ or $\beta_{ij}$. Clearly $\gap(a,b)\le\Brel\le\Uind$, so a negative value of either bound certifies the pair. The bound $\Brel$ ignores that the same coordinates feed many windows. The following result says exactly when this loses nothing, and produces a completion as a by-product.

\subsection{When the relaxation is exact}

Call a window increasing when $\alpha_{ij}<\beta_{ij}$ and $\overline{z}_{ij}>\alpha_{ij}$, and set for such windows
\begin{equation}
\theta_{ij}=\min\{\overline{z}_{ij},\beta_{ij}\},\qquad
\xup_{t}=\max\Bigl\{\ell_{t},\ \max_{(i,j)\in\Win:\ i\le t\le j}\theta_{ij}\Bigr\},
\label{eq:xup}
\end{equation}
where the inner maximum runs over the increasing windows that contain $t$ and is $-\infty$ when there are none.

\begin{theorem}[Exactness of the relaxation]\label{thm:relax}
The completion $\xup$ lies in $\Qbox$, and $\gap(a,b)=\Brel$ if and only if $F_{a,b}(\xup)=\Brel$.
\end{theorem}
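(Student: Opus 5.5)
The plan has three parts: feasibility of $\xup$, the easy direction of the equivalence, and the converse, which carries the substance. For feasibility, note that $\xup_{t}\ge\ell_{t}$ by construction. For the upper bound, any increasing window $(i,j)$ containing $t$ has $\theta_{ij}\le\overline{z}_{ij}=\wmin{u}{i}{j}\le u_{t}$, so every term of the inner maximum in \eqref{eq:xup} is at most $u_{t}$, and so is $\ell_{t}$. Hence $\xup\in\Qbox$. The direction ``if $F_{a,b}(\xup)=\Brel$ then $\gap=\Brel$'' is then immediate: $\Brel=F_{a,b}(\xup)\le\gap(a,b)\le\Brel$.

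For the converse I first describe, window by window, the set $M_{ij}\subseteq[\underline{z}_{ij},\overline{z}_{ij}]$ of maximisers of $\varphi_{ij}$ on its feature box, using the shape recalled in the proof of Proposition~\ref{prop:l1}. There are three cases.
\begin{itemize}
\item If $\alpha_{ij}=\beta_{ij}$, then $\varphi_{ij}\equiv 0$ and $M_{ij}$ is the whole box.
\item If $\alpha_{ij}<\beta_{ij}$, then $\varphi_{ij}$ is nondecreasing, constant below $\alpha_{ij}$ and constant above $\beta_{ij}$. When $\overline{z}_{ij}\le\alpha_{ij}$ it is constant on the box and $M_{ij}$ is everything. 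Otherwise the window is increasing and $M_{ij}=\{v\in[\underline{z}_{ij},\overline{z}_{ij}]:v\ge\theta_{ij}\}$, which covers both subcases $\overline{z}_{ij}\ge\beta_{ij}$ and $\alpha_{ij}<\overline{z}_{ij}<\beta_{ij}$.
\item If $\alpha_{ij}>\beta_{ij}$, then $\varphi_{ij}$ is nonincreasing and constant below $\beta_{ij}$, so $M_{ij}=\{v\in[\underline{z}_{ij},\overline{z}_{ij}]:v\le\rho_{ij}\}$ with $\rho_{ij}=\max\{\underline{z}_{ij},\beta_{ij}\}$.
\end{itemize}
Since the window minima of any completion lie in the feature boxes, $F_{a,b}(x)=\Brel$ holds exactly when $\wmin{x}{i}{j}\in M_{ij}$ for every window.

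Now assume $\gap(a,b)=\Brel$ and take a maximiser $x\in\Qbox$, so that every window minimum of $x$ lies in its $M_{ij}$. The key step is to show $x\ge\xup$ coordinatewise. For each $t$ we have $x_{t}\ge\ell_{t}$. For any increasing window containing $t$ we have $x_{t}\ge\wmin{x}{i}{j}\ge\theta_{ij}$. Taking the maximum gives $x_{t}\ge\xup_{t}$.

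It remains to check that $\xup$ also lands in every $M_{ij}$. Minima are monotone, so $\wmin{\xup}{i}{j}\le\wmin{x}{i}{j}\le\rho_{ij}$ on the decreasing windows. On the increasing windows, $\wmin{\xup}{i}{j}\ge\theta_{ij}$ directly from \eqref{eq:xup}. Membership in the feature box holds because $\xup\in\Qbox$. Hence $F_{a,b}(\xup)=\Brel$.

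The main obstacle is getting the maximiser sets right in the boundary cases: $\overline{z}_{ij}\le\alpha_{ij}$, windows where $\beta_{ij}<\underline{z}_{ij}$, and the reason $\theta_{ij}$ uses $\min\{\overline{z}_{ij},\beta_{ij}\}$. I would verify each case carefully against the piecewise-linear shape of $\varphi_{ij}$. Once these sets are correct, the argument reduces to the observation that the constraints are upward-closed on increasing windows and downward-closed on decreasing ones, and $\xup$ is the least point satisfying the upward ones.
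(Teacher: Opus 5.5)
\textbf{Verdict: right approach, but one case of your per-window analysis is wrong, and the converse as written depends on it.} Your structure is the paper's: feasibility from $\theta_{ij}\le\overline{z}_{ij}\le u_{t}$, the sandwich $\Brel=F_{a,b}(\xup)\le\gap(a,b)\le\Brel$ for one direction, and, for the converse, showing that a completion attaining every maximum in \eqref{eq:brel} dominates $\xup$ coordinatewise. Your treatment of the increasing windows is correct.

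\textbf{The error.} It lies in your set $M_{ij}$ for $\alpha_{ij}>\beta_{ij}$. There $\varphi_{ij}$ is constant not only below $\beta_{ij}$ but also above $\alpha_{ij}$, where it equals $\beta_{ij}-\alpha_{ij}$. So when $\underline{z}_{ij}\ge\alpha_{ij}$ every point of the feature box is a maximiser. Your formula instead gives $\rho_{ij}=\max\{\underline{z}_{ij},\beta_{ij}\}=\underline{z}_{ij}$ and hence $M_{ij}=\{\underline{z}_{ij}\}$.

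This undercount is not harmless, because your converse needs $\wmin{x}{i}{j}\in M_{ij}$ for an arbitrary maximiser $x$. Take $n=2$, $\tau=1$, $\Qbox=[5,10]\times[8,9]$, $a=(4,6)$ and $b=(7,2)$.
\begin{itemize}
\item Window $(1,1)$ is increasing with $\theta_{11}=7$.
\item Window $(1,2)$ has $\alpha_{12}=4>\beta_{12}=2$ and feature box $[5,9]$.
\item On their feature boxes $\varphi_{22}\equiv-4$ and $\varphi_{12}\equiv-2$.
\end{itemize}
The maximisers of $F_{a,b}$ are exactly the completions with $x_{1}\ge 7$. The completion $\xup=(7,8)$ is one of them, and $\gap(a,b)=\Brel$. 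Yet every maximiser has $\wmin{x}{1}{2}\ge 7>5=\rho_{12}$. This breaks three things:
\begin{itemize}
\item Your claim that $F_{a,b}(x)=\Brel$ holds exactly when all window minima lie in your sets $M_{ij}$ is false. With $M_{12}=\{5\}$ and $M_{11}$ requiring $x_{1}\ge 7$, it would make $\Brel$ unattainable here.
\item The chain $\wmin{\xup}{i}{j}\le\wmin{x}{i}{j}\le\rho_{ij}$ fails.
\item Your final assertion that $\xup$ lies in every $M_{ij}$ is false, even though the theorem holds.
\end{itemize}

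\textbf{The repair.} Use the paper's move and record no condition at all for windows with $\alpha_{ij}>\beta_{ij}$. Since $\varphi_{ij}$ is then nonincreasing on the whole line, $x\ge\xup$ gives $\wmin{\xup}{i}{j}\le\wmin{x}{i}{j}$. Hence $\varphi_{ij}(\wmin{\xup}{i}{j})\ge\varphi_{ij}(\wmin{x}{i}{j})$, which is the maximum over the feature box, and feasibility of $\xup$ gives the reverse inequality.

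If you prefer to keep explicit sets, the only fact needed is that the maximiser set of a nonincreasing function on an interval is an initial segment of it. So you may keep $M_{ij}=\{v\in[\underline{z}_{ij},\overline{z}_{ij}]:v\le\rho_{ij}\}$, provided you set $\rho_{ij}=\max\{\underline{z}_{ij},\beta_{ij}\}$ when $\underline{z}_{ij}<\alpha_{ij}$ and $\rho_{ij}=\overline{z}_{ij}$ otherwise. With that correction your argument coincides with the paper's proof.
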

\begin{proof}
Feasibility holds because $\theta_{ij}\le\overline{z}_{ij}\le u_{t}$ for every $t$ in the window. Equality $\gap(a,b)=\Brel$ holds if and only if some completion attains every maximum in \eqref{eq:brel} at once, so we describe, for each window, the set of completions that attain its maximum. A window with $\alpha_{ij}=\beta_{ij}$ has $\varphi_{ij}\equiv 0$ and imposes nothing. For $\alpha_{ij}<\beta_{ij}$ the term $\varphi_{ij}$ is constant below $\alpha_{ij}$, increasing between $\alpha_{ij}$ and $\beta_{ij}$ and constant above $\beta_{ij}$. If $\overline{z}_{ij}\le\alpha_{ij}$ the term is constant on the feature box and imposes nothing. Otherwise its maximum is attained exactly when $\wmin{x}{i}{j}\ge\theta_{ij}$, that is, when $x_{t}\ge\theta_{ij}$ for every $t$ in the window. For $\alpha_{ij}>\beta_{ij}$ the term is nonincreasing on the whole line, so lowering any coordinate can only raise its value, and no condition needs to be recorded for such windows. The increasing windows impose a conjunction of lower bounds on coordinates, and $\xup$ is the coordinate-wise smallest point of $\Qbox$ that satisfies all of them.

If $F_{a,b}(\xup)=\Brel$ then $\gap(a,b)\ge F_{a,b}(\xup)=\Brel\ge\gap(a,b)$. Conversely, if $\gap(a,b)=\Brel$, let $x^{\ast}$ attain every maximum in \eqref{eq:brel}. Then $x^{\ast}$ satisfies all lower bounds, so $x^{\ast}\ge\xup$ coordinate-wise. For a window with $\alpha_{ij}>\beta_{ij}$ this gives $\wmin{\xup}{i}{j}\le\wmin{x^{\ast}}{i}{j}$ and hence $\varphi_{ij}(\wmin{\xup}{i}{j})\ge\varphi_{ij}(\wmin{x^{\ast}}{i}{j})$, which is the maximum, and the reverse inequality holds because $\xup$ is feasible. The increasing windows attain their maxima at $\xup$ by construction, and the remaining windows impose nothing, so $F_{a,b}(\xup)=\Brel$.
\end{proof}

The completion $\xup$ costs $O(n^{2})$ time through $G(i,j)=\max\{\theta_{ij},G(i-1,j),G(i,j+1)\}$ and $\xup_{t}=\max\{\ell_{t},G(t,t)\}$, where $\theta_{ij}$ is $-\infty$ for windows that are not increasing, out-of-range entries are $-\infty$, and the table is filled in order of decreasing window length, because every window that contains $[i,j]$ other than itself contains $[i-1,j]$ or $[i,j+1]$. Theorem~\ref{thm:relax} decides in quadratic time whether the cheap bound is already the exact gap.

\subsection{Witnesses and the certification procedure}

By Definition~\ref{def:certificates}, any feasible completion $z$ with $F_{a,b}(z)\ge 0$ refutes stability. The fast layer tries five such witnesses, the completion $\xup$, the two corners $\ell$ and $u$, the midpoint of the box, and the linear interpolation $\hat{x}$ of the observed slots clipped to the box. Each costs one evaluation of \eqref{eq:gap}. On integer-coded inputs all five are integer points of the box, the midpoint by integer division and the interpolation by rounding and clipping, so every $F_{a,b}(z)$ is evaluated in exact integer arithmetic and no witness verdict depends on floating point. The corner $\ell$ sets every gap to the lowest admissible value and pulls every window minimum that touches a gap to the bottom of its feature box, which is a strong perturbation at large $\tau$, whereas $\xup$ raises coordinates only where the increasing windows need it. Section~\ref{sec:experiments} shows that the two are complementary.

\begin{algorithm}[t]
\caption{PairCertify$(\Qbox,a,b,\tau)$}
\label{alg:pair}
\begin{algorithmic}[1]
\IF{$\tau=0$}
\RETURN CERTIFIED if \eqref{eq:l1-closed} is negative, else REFUTED
\ENDIF
\STATE compute $\Brel$ by \eqref{eq:brel} and $\xup$ by \eqref{eq:xup}
\IF{$\Brel<0$}
\RETURN CERTIFIED
\ENDIF
\FOR{$z$ in $\xup$, $\ell$, $u$, integer midpoint of $\Qbox$, $\hat{x}$}
\IF{$F_{a,b}(z)\ge 0$}
\RETURN REFUTED
\ENDIF
\ENDFOR
\STATE compute $\gap(a,b)$ by Theorem~\ref{thm:dp}
\RETURN CERTIFIED if $\gap(a,b)<0$, else REFUTED
\end{algorithmic}
\end{algorithm}

Algorithm~\ref{alg:pair} combines the pieces and is stated for integer-coded inputs. For real-valued boxes the witness checks use the unrounded midpoint and interpolation, which are feasible, and nothing else changes. Every CERTIFIED answer rests on a valid upper bound or on the exact gap, and every REFUTED answer rests on a feasible completion or on the exact gap, so the procedure is sound and complete for deciding whether $\gap(a,b)<0$, which is the strict stability of Definition~\ref{def:certificates}. It does not separate $\gap(a,b)=0$ from $\gap(a,b)>0$, since a witness with $F_{a,b}(z)=0$ already refutes strict stability. When $F_{a,b}(\xup)=\Brel$, Theorem~\ref{thm:relax} identifies the exact gap as $\Brel$, and Algorithm~\ref{alg:pair} returns the corresponding decision. The cost is $O(n^{2})$ per pair whenever the fast layer decides, and $O(n^{3}m)$ otherwise.

The guaranteed candidate set of Definition~\ref{def:certificates} needs a set of dominators. We rank the gallery by $\dist(\hat{x},\cdot)$ and take the first $\max\{3,k\}$ candidates as $\Dom$, then keep every $a\in\Gal$ that fewer than $k$ dominators defeat under Algorithm~\ref{alg:pair}, stopping the tests of a candidate as soon as $k$ dominators defeat it. The interpolation needs at least one observed slot, which every query in our experiments has. The nominal completion $\hat{x}$ chooses the dominators and serves as one witness, and Proposition~\ref{prop:candidate} holds for any choice, so a poor $\hat{x}$ can enlarge the set but never breaks its guarantee. For the label certificate we take the label $y$ of the first ranked candidate, the first three ranked candidates with label $y$, and test each of them against all candidates with a different label, hardest first in the nominal order, stopping at the first refutation. A candidate set costs at most $|\Dom|\,N$ calls of Algorithm~\ref{alg:pair}, and the label certificate at most three times the number of differently labelled candidates.

\section{Experiments}\label{sec:experiments}

\begin{table*}[!t]
\caption{Certification tests at $\tau=47$ and $k=1$ on 1\,000 queries per dataset with transplanted real masks. Bold marks the best value among sound tests. Complete refers to the pairwise decision of strict stability, not to minimality of the candidate set. DP-only timing uses the first 200 queries, other timings all 1\,000.}
\label{tab:main}
\centering
\scriptsize
\setlength{\tabcolsep}{3.5pt}
\begin{tabular}{@{}lcc rrrrr rrrrr rrrrr@{}}
\toprule
& & & \multicolumn{5}{c}{METR-LA} & \multicolumn{5}{c}{PEMS-BAY} & \multicolumn{5}{c}{PeMSD7-M} \\
\cmidrule(lr){4-8}\cmidrule(lr){9-13}\cmidrule(lr){14-18}
Test & Sound & Compl. & $|P_{1}|/N\downarrow$ & Ret.$\uparrow$ & Strict$\uparrow$ & False$\downarrow$ & ms$\downarrow$ & $|P_{1}|/N\downarrow$ & Ret.$\uparrow$ & Strict$\uparrow$ & False$\downarrow$ & ms$\downarrow$ & $|P_{1}|/N\downarrow$ & Ret.$\uparrow$ & Strict$\uparrow$ & False$\downarrow$ & ms$\downarrow$ \\
\midrule
IDB & \checkmark & $\times$ & 91.7 & 100.0 & 0.2 & 0.0 & 1.5 & 100.0 & 100.0 & 0.1 & 0.0 & 2.9 & 97.1 & 100.0 & 0.0 & 0.0 & 0.7 \\
FBR & \checkmark & $\times$ & 16.7 & 100.0 & 6.1 & 0.0 & \textbf{0.8} & 16.0 & 100.0 & 5.2 & 0.0 & \textbf{1.4} & 22.9 & 100.0 & 9.4 & 0.0 & \textbf{0.4} \\
\midrule
Fixed completions & $\times$ & -- & 13.1 & 98.6 & 7.3 & 26.5 & 0.4 & 13.9 & 98.2 & 6.5 & 20.2 & 0.7 & 17.1 & 99.5 & 11.2 & 14.6 & 0.2 \\
MC sampling, $K=100$ & $\times$ & -- & 13.6 & 99.3 & 6.9 & 18.7 & 5.2 & 14.0 & 98.8 & 6.2 & 15.2 & 8.4 & 20.0 & 99.9 & 10.9 & 7.6 & 4.4 \\
MC sampling, $K=1000$ & $\times$ & -- & 14.3 & 99.6 & 6.8 & 15.9 & 43.7 & 14.6 & 99.1 & 6.0 & 13.3 & 77.4 & 20.0 & 99.9 & 10.8 & 5.8 & 19.7 \\
\midrule
\method{} w/o fast layer & \checkmark & \checkmark & 15.4 & 100.0 & 6.8 & 0.0 & 175.6 & 15.3 & 100.0 & 5.9 & 0.0 & 265.3 & 20.0 & 100.0 & 10.3 & 0.0 & 79.5 \\
\method{} (ours) & \checkmark & \checkmark & \textbf{15.4} & 100.0 & \textbf{6.8} & 0.0 & 3.2 & \textbf{15.3} & 100.0 & \textbf{5.9} & 0.0 & 4.2 & \textbf{20.0} & 100.0 & \textbf{10.3} & 0.0 & 0.9 \\
\midrule
\multicolumn{18}{@{}l}{\textit{Ablations of \method{}}} \\
one dominator & \checkmark & \checkmark & 18.8 & 100.0 & 6.8 & 0.0 & 1.8 & 18.5 & 100.0 & 5.9 & 0.0 & 2.0 & 25.7 & 100.0 & 10.3 & 0.0 & 0.3 \\
ten dominators & \checkmark & \checkmark & 12.0 & 100.0 & 6.8 & 0.0 & 6.4 & 12.5 & 100.0 & 5.9 & 0.0 & 7.7 & 17.1 & 100.0 & 10.3 & 0.0 & 2.5 \\
midpoint completion & \checkmark & \checkmark & 18.1 & 100.0 & 6.8 & 0.0 & 4.9 & 33.8 & 100.0 & 5.9 & 0.0 & 5.9 & 22.9 & 100.0 & 10.3 & 0.0 & 1.8 \\
$\tau=0$, the L1 end & \checkmark & \checkmark & 6.1 & 100.0 & 22.6 & 0.0 & 0.1 & 2.8 & 100.0 & 27.7 & 0.0 & 0.2 & 5.7 & 100.0 & 33.3 & 0.0 & 0.1 \\
\bottomrule
\end{tabular}
\end{table*}

\begin{figure*}[!t]
\centering
\includegraphics[width=0.92\textwidth]{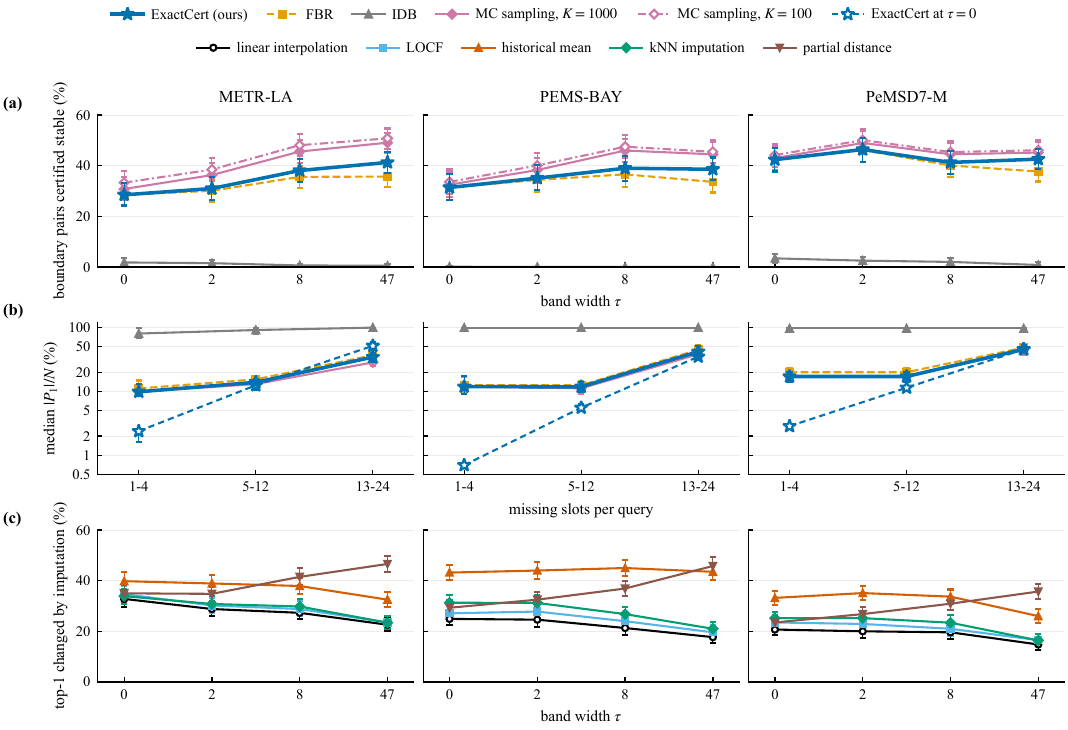}
\caption{Certification results on three traffic-sensor datasets with transplanted real masks. (a) Share of top-5 boundary pairs that each test certifies as stable, for growing band width $\tau$. \method{} decides $\gap<0$ exactly, so its curve is the ground truth. (b) Median size of the guaranteed candidate set as a share of the gallery, by the number of missing slots out of 48, at $\tau=47$, with \method{} at $\tau=0$ shown as the L1 end of the family. Sets produced by \method{}, FBR and IDB contain every true nearest neighbour whenever the true query lies in the box. (c) Share of queries whose top-1 changes under each missing-data practice, against $\tau$. $\tau$ takes the values 0, 2, 8 and 47 on an evenly spaced axis, where $\tau=0$ is the L1 distance and $\tau=47$ the full window-minimum distance. Error bars are 95\% intervals from a bootstrap over sensors.}
\label{fig:main}
\end{figure*}

\subsection{Setup}\label{sec:setup}

\textbf{Data.} Table~\ref{tab:datasets} lists the datasets. METR-LA, PEMS-BAY and PeMSD7-M are five-minute loop-detector speeds from the releases of \cite{li2018dcrnn} and \cite{yu2018stgcn}, with zeros treated as missing as in those releases. Together they hold 760 sensors and 93\,165 sensor-days. The three labelled UCR sets \cite{dau2019ucr} add classification tasks. DodgerLoopDay \cite{ihler2006adaptive} contains real gaps and is aggregated to fifteen-minute block sums, so that a bin with missing points becomes a training-bounded interval. ItalyPowerDemand and Wafer are complete and are used with controlled deletion.

\begin{table}[t]
\caption{Datasets. Series and test are sensors and days for the traffic sets and training and test series for the UCR sets. Gallery is the median same-sensor gallery or the number of complete training series.}
\label{tab:datasets}
\centering
\scriptsize
\setlength{\tabcolsep}{3.5pt}
\begin{tabular}{@{}lrrrrr@{}}
\toprule
Dataset & Sens./train & Days/test & $n$ & Missing & Gallery \\
\midrule
METR-LA & 207 & 119 & 48 & 8.1\% & 82 \\
PEMS-BAY & 325 & 180 & 48 & $<$0.01\% & 144 \\
PeMSD7-M & 228 & 44 & 48 & 0 & 35 \\
\midrule
DodgerLoopDay & 78 & 80 & 96 & real & 67 \\
ItalyPowerDemand & 67 & 1\,029 & 24 & 0 & 67 \\
Wafer & 1\,000 & 6\,164 & 152 & 0 & 1\,000 \\
\bottomrule
\end{tabular}
\end{table}

\textbf{Representation and boxes.} Speeds are coded as integers at one eighth of a mile per hour, so all certificates are decided in exact integer arithmetic. Codes are rounded to the nearest integer, and every guarantee and every ground truth below refers to the coded data. Each traffic dataset is split in time, the first 80\% of days form the history and the last 20\% the test period. A query is one test day of one sensor in the afternoon window, and its gallery holds the complete history days of the same sensor, with median size 82, 144 and 35 for the three sets. A missing slot receives a history-bounded box from zero to the largest speed the sensor recorded in its history, a convention rather than a physical limit. The true value then lies inside the box for 100\%, 99.7\% and 100\% of the queries used below. To have ground truth we take complete test days and transplant real gap patterns onto them. The pattern pool holds the 2\,773 partially missing METR-LA windows with one to 24 missing slots, median six, and 78\% of them are a single contiguous block. Each experiment uses 1\,000 such queries per dataset with a fixed seed. The four band widths share the same queries, so the main experiment evaluates 3\,000 queries at four band widths, 12\,000 query and band-width combinations. The 621 METR-LA test days with real gaps are evaluated separately, without ground truth.

\textbf{Band widths and baselines.} We report $\tau\in\{0,2,8,47\}$, from L1 to the full band, with $\tau=47$ as the main setting. Certification tests are compared in three groups. Sound but incomplete tests are the independent bound IDB of \eqref{eq:uind} and the relaxation FBR of \eqref{eq:brel}. Heuristics without guarantees declare a pair stable only when $F_{a,b}(z)<0$ for every inspected completion $z$, so an inspected tie also refutes strict stability. They are a fixed-completion check that inspects the two corners $\ell$ and $u$, the midpoint of the box and the linear interpolation of the observed slots, and Monte Carlo sampling with $K=100$ or $K=1000$ uniform completions of the box plus the same four. Neither uses $\xup$, which depends on the pair. \method{} with the fast layer removed, that is, the dynamic programme on every pair, is the ablation of Section~\ref{sec:fast}. Missing-data practices are linear interpolation, last observation carried forward, the historical mean of the slot, kNN imputation with five neighbours, and partial distance on the fully observed windows.

\textbf{Metrics.} Top-1 changed means that the candidate returned for the imputed query is not among the candidates at the minimum true distance, so ties are not counted as changes. Set size is the median of $|P^{1}|/N$ over the queries. Retention is the share of queries, among those whose truth lies inside the box, in which every candidate at the minimum true distance is kept in $P^{1}$. Strict is the share of queries with $|P^{1}|=1$. False is the share of certificates issued on the top-5 boundary pairs that are wrong, where a boundary pair joins the fourth or fifth and the sixth or seventh candidate of the nominal ranking. Time is the median milliseconds per query for building $P^{1}$, after Numba warm-up and excluding the precomputation of gallery features. Quality metrics and the times of every row except one use all 1\,000 queries per dataset, and the programme-only row was timed on the first 200 of them.

\textbf{Protocol and uncertainty.} All thresholds and budgets were fixed on a pseudo-split inside the history period before any test-period result was seen. The method is deterministic, so the uncertainty reported is a bootstrap over sensors with 1\,000 resamples, and for the deletion experiments the spread over three seeds. For the percentage-valued metrics in Table~\ref{tab:main} the bootstrap intervals have half-width at most 3 points for every entry except the set size of IDB on METR-LA, where it is 9 points. The kernel was verified against brute-force enumeration in exact rational arithmetic on 2\,350 random instances, including a refined grid between critical values for Lemma~\ref{lem:finite} and the equivalence of Theorem~\ref{thm:relax}. Everything runs single-threaded in NumPy and Numba on one Mac M5 laptop.

\subsection{Imputation changes retrieval}

Fig.~\ref{fig:main}(c) reports how often the top-1 returned after imputation differs from the top-1 of the true query. With linear interpolation and the L1 distance this happens for 32.8\%, 24.9\% and 20.7\% of the queries on the three sets, and 22.6\%, 17.7\% and 14.8\% at the full band. The historical mean is the worst practice at $\tau=0$, at 33\% to 43\%, and partial distance, which discards every window that touches a gap, degrades as $\tau$ grows and is the worst at the full band with 36\% to 47\%, because long windows almost always touch the gap. The certificate separates the reliable answers from the rest. For linear interpolation, the top-1 change rate is 0\% among the queries that \method{} certifies strictly, on every dataset and at every $\tau$, while among the remaining queries it is 42.4\%, 34.4\% and 31.0\% at $\tau=0$ and 24.2\%, 18.8\% and 16.5\% at $\tau=47$.

\subsection{Certificates against cheaper tests}

Table~\ref{tab:main} is the main comparison at $\tau=47$ and $k=1$, and Fig.~\ref{fig:main}(a) and (b) show the same tests across $\tau$ and across the amount missing. Three findings.

\textbf{Sound tests are sound, and \method{} is the tightest of them.} On queries whose true series lies in the box, all sound tests retain every true nearest neighbour, as Proposition~\ref{prop:candidate} guarantees, and no strictly certified top-1 was wrong on any of the 1\,000 queries per dataset. Table~\ref{tab:main} shows how much each prunes. IDB prunes almost nothing, and \method{} reduces the median candidate-set fraction by 0.7 to 2.9 points relative to FBR. The difference is larger at the pair level, on a challenging set of top-5 boundary pairs. Of the pairs that are exactly stable, FBR certifies 86.4\%, 87.0\% and 88.5\% and IDB 1.2\%, 0.0\% and 1.9\%, while \method{} certifies all of them by construction, and Fig.~\ref{fig:main}(a) shows the growth of this margin with $\tau$. At $\tau=0$ FBR coincides with the exact gap, as Proposition~\ref{prop:l1} predicts, while IDB stays loose because it maximises and minimises the two distances at different completions.

\textbf{Heuristics look better and are wrong.} The fixed-completion check and Monte Carlo sampling produce smaller sets than \method{} by accepting additional, unverified dominations, and some of these exclusions remove true nearest neighbours. The fixed-completion check loses one in 14, 18 and 5 of the 1\,000 queries, sampling with $K=100$ in 7, 12 and 1, and sampling with $K=1000$ in 4, 9 and 1. On the boundary pairs, 26.5\%, 20.2\% and 14.6\% of the fixed-completion certificates and 15.9\%, 13.3\% and 5.8\% of the $K=1000$ certificates are false. Sampling is the slowest of the bounds and heuristics, because every candidate costs $K$ evaluations of the distance, although it stays faster than the programme on every pair.

\textbf{Exactness costs little.} On the top-5 boundary pairs the fast layer decides 90.2\%, 92.0\% and 92.7\% of them without the dynamic programme, and the relaxation is exactly tight on 34.5\%, 48.3\% and 53.2\% of them. A candidate set costs 3.2, 4.2 and 0.9 milliseconds on the same-sensor galleries, against 176, 265 and 80 milliseconds for the dynamic programme alone, with identical sets. Ties, $\gap(a,b)=0$, occur on 0.1\% to 2.3\% of the boundary pairs.

\subsection{What the band width trades}

\begin{figure*}[!t]
\centering
\includegraphics[width=0.92\textwidth]{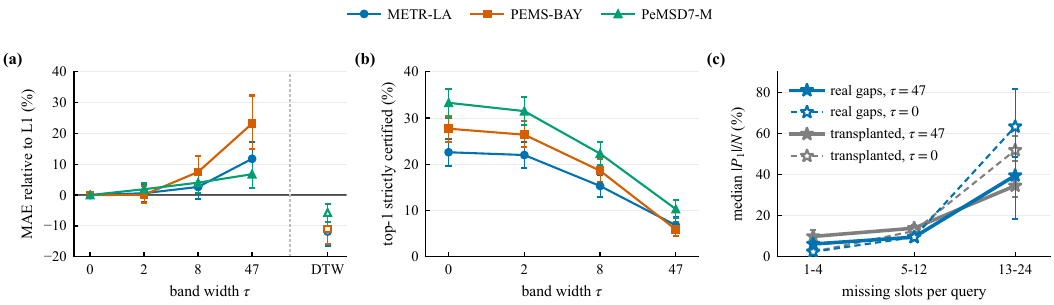}
\caption{Effects of band width on forecasting accuracy and certified retrieval. Panels (b) and (c) use \method{}. (a) Change of the 5-NN next-hour MAE relative to the L1 distance on each dataset, with hollow markers giving the same quantity for tuned DTW. (b) Share of test queries whose top-1 is strictly certified. (c) Median size of the guaranteed candidate set on METR-LA by the number of missing slots out of 48, for test days with real gaps and for complete test days with transplanted real masks. The $\tau$ axis and the error bars are as in Fig.~\ref{fig:main}. The real-gap stratum with 13 to 24 missing slots contains 45 queries and the transplanted one 184.}
\label{fig:tradeoff}
\end{figure*}

The forecast experiment uses complete test days as queries, so it compares distances rather than distances combined with an imputer. For each query the five nearest history days of the same sensor under the given distance are retrieved, the prediction is the average of their next-hour mean speed, and the error is the mean absolute error in miles per hour. DTW uses a Sakoe--Chiba band whose width fraction was chosen from $\{0, 0.05, 0.1, 0.2, 1\}$ on the history pseudo-split, with ties going to the smaller band, which selected 1 on METR-LA and 0.2 on the other two.

Fig.~\ref{fig:tradeoff} shows what the band width trades. Accuracy falls. The forecast error grows from the L1 distance to the full band by 11.8\%, 23.2\% and 6.8\%, while tuned DTW improves on L1 by 11.8\%, 11.0\% and 5.7\% and remains outside the certifiable family. Strict certification falls too, from 22.6\%, 27.7\% and 33.3\% of the queries at $\tau=0$ to 6.8\%, 5.9\% and 10.3\% at $\tau=47$. Robustness to imputation rises, as the change rates above show. The regime with many missing slots is where the full band pays. On METR-LA test days with real gaps and 13 to 24 missing slots, the median guaranteed set is 63.3\% of the gallery under L1 and 39.5\% under the full band, and the transplanted masks give the same picture on METR-LA with 51.9\% against 34.5\%. Fig.~\ref{fig:main}(b) shows that PEMS-BAY reverses the order and PeMSD7-M shows no difference, so the crossover is dataset dependent. In our experiments, L1 is cheaper and yields smaller candidate sets when few slots are missing.

\subsection{Efficiency and scale}

Table~\ref{tab:efficiency} has two parts. The first varies the window length on PEMS-BAY from $n=24$ to $n=144$ at full band, on 200 boundary pairs per length with the same real gap patterns placed inside the window. The dynamic programme grows from 0.16 to 40 milliseconds per pair, the fast layer from 0.009 to 0.24 milliseconds, and the complete method from 0.02 to 5.4 milliseconds on average because it runs the programme on 2.5\% to 13.5\% of the pairs. The growth is slower than the $O(n^{3}m)$ bound since the admissibility test inside \eqref{eq:dp-E} prunes most positions. The second part uses every complete history day of every sensor as one gallery, with 16\,455, 46\,719 and 7\,980 candidates. At $\tau=47$ a guaranteed set then takes 0.31, 0.71 and 0.15 seconds per query, against 36, 76 and 17 seconds for the programme alone. On the ten queries per dataset that were run both ways the speedup is 82, 100 and 110 with identical answers. On average 158, 310 and 77 of the 23\,674, 66\,392 and 11\,675 pair tests per query reach the programme, about ten million pair tests over the 300 queries. The relaxation alone takes 0.14, 0.40 and 0.07 seconds, so exactness costs about a factor of two at network scale.

\begin{table}[t]
\caption{Cost of certification. Top: milliseconds per pair against the window length on PEMS-BAY at full band, where Fast and DP report median time, DP p95 the 95th percentile, and Full mean time. $m$ reports the median critical-set size over the sampled pairs. Bottom: seconds per query on the network-wide gallery at $\tau=47$, medians over 100 queries per dataset for \method{} and FBR and over the ten queries on which the programme alone was run. Speedup is the ratio of unrounded medians on those same ten queries, on which \method{} takes 0.44, 0.76 and 0.16 seconds.}
\label{tab:efficiency}
\centering
\scriptsize
\setlength{\tabcolsep}{3.5pt}
\begin{tabular}{@{}rrrrrrr@{}}
\toprule
$n$ & $m$ & Fast & DP & DP p95 & Full & DP needed \\
\midrule
24 & 33 & 0.009 & 0.16 & 0.25 & 0.02 & 2.5\% \\
48 & 52 & 0.016 & 1.19 & 2.20 & 0.14 & 9.5\% \\
96 & 76 & 0.057 & 9.74 & 19.63 & 1.38 & 13.0\% \\
144 & 124 & 0.235 & 39.96 & 65.10 & 5.37 & 13.5\% \\
\bottomrule
\end{tabular}

\vspace{1ex}

\begin{tabular}{@{}lrrrrr@{}}
\toprule
Dataset & $N$ & \method{} & FBR & DP only & Speedup \\
\midrule
METR-LA & 16\,455 & 0.31 & 0.14 & 36.2 & 82$\times$ \\
PEMS-BAY & 46\,719 & 0.71 & 0.40 & 75.8 & 100$\times$ \\
PeMSD7-M & 7\,980 & 0.15 & 0.07 & 17.2 & 110$\times$ \\
\bottomrule
\end{tabular}
\end{table}

\subsection{Label invariance on public benchmarks}

\textbf{Protocol.} ItalyPowerDemand and Wafer are coded by $c(x)=\operatorname{round}\bigl[4096\cdot\tfrac12\bigl(1+x/(s+|x|)\bigr)\bigr]$ with $s$ the median absolute training value, an order-preserving map to integer codes in $[0,4096]$. The transform fixes the representation on which both nominal retrieval and certification are defined, and nearest-neighbour rankings in the original units are not assumed to be preserved. DodgerLoopDay counts are summed over blocks of three five-minute points, and a block with $k$ missing points receives the interval from the sum of its observed points to that sum plus $k$ times the largest single count in the training set, which is 90. Its gallery is the 67 training series without gaps, and weekend means the official classes 1 and 7, Sunday and Saturday. Deletion removes one contiguous block covering the stated share of the slots, with a uniformly random start, and replaces it by the box $[0,4096]$ for the two coded sets and $[0,270]$ for DodgerLoopDay. On the 77 complete DodgerLoopDay test series the largest block sum is 181, so the deletion box covers every masked value, and the coded sets cover theirs by construction. Queries are all 1\,029 ItalyPowerDemand test series, the 77 complete DodgerLoopDay test series, and 100 Wafer test series per class, and each deletion level is repeated with three seeds.

Table~\ref{tab:labels} applies Proposition~\ref{prop:label} with contiguous block deletion of 25\% of the slots on the three UCR sets. On ItalyPowerDemand the label of 53.4\% of the test series is certified invariant at $\tau=0$, and the nominal 1-NN is correct on 98.1\% of the certified series against 77.5\% of the others. On DodgerLoopDay with weekday against weekend labels the numbers are 49.8\% certified and 100\% against 89.7\%, and on Wafer, with balanced accuracy, 64.8\% certified and 100\% against 95.2\%. In no cell does a certified label change under the true series, which is the guarantee. Deleting 10\% of the slots certifies 92\% to 96\% of the series on ItalyPowerDemand and Wafer at $\tau=0$, and deleting 50\% certifies 12\% on ItalyPowerDemand and none on Wafer. The original seven-class task is challenging in our 1-NN experiments, with 59.7\% accuracy on complete series, and certifies 3\% of the queries at 25\% deletion. Wafer certifies its abnormal class less often than its normal class, 41\% to 50\% against 80\% to 86\% at $\tau=0$.

\begin{table}[t]
\caption{1-NN label invariance under 25\% contiguous deletion, mean and population standard deviation over three seeds, in percent. Wafer uses balanced accuracy on 100 queries per class.}
\label{tab:labels}
\centering
\scriptsize
\setlength{\tabcolsep}{4pt}
\begin{tabular}{@{}rrrrr@{}}
\toprule
$\tau$ & Certified$\uparrow$ & Acc.\ certified & Acc.\ uncertified & Acc.\ all \\
\midrule
\multicolumn{5}{@{}l}{\textit{ItalyPowerDemand}, $n=24$} \\
0 & 53.4 $\pm$ 1.0 & 98.1 $\pm$ 0.1 & 77.5 $\pm$ 2.1 & 88.5 $\pm$ 0.8 \\
8 & 30.9 $\pm$ 0.7 & 97.3 $\pm$ 0.6 & 79.0 $\pm$ 0.8 & 84.6 $\pm$ 0.4 \\
23 & 21.0 $\pm$ 1.0 & 97.5 $\pm$ 0.5 & 79.1 $\pm$ 0.6 & 83.0 $\pm$ 0.6 \\
\midrule
\multicolumn{5}{@{}l}{\textit{DodgerLoopDay}, weekday against weekend, $n=96$} \\
0 & 49.8 $\pm$ 2.2 & 100.0 $\pm$ 0.0 & 89.7 $\pm$ 2.1 & 94.8 $\pm$ 1.1 \\
8 & 43.7 $\pm$ 7.1 & 97.8 $\pm$ 1.6 & 89.9 $\pm$ 1.4 & 93.5 $\pm$ 1.1 \\
95 & 8.2 $\pm$ 2.4 & 96.3 $\pm$ 5.2 & 95.3 $\pm$ 0.6 & 95.2 $\pm$ 0.6 \\
\midrule
\multicolumn{5}{@{}l}{\textit{Wafer}, balanced accuracy, $n=152$} \\
0 & 64.8 $\pm$ 1.0 & 100.0 $\pm$ 0.0 & 95.2 $\pm$ 1.8 & 97.8 $\pm$ 0.5 \\
8 & 64.8 $\pm$ 3.7 & 100.0 $\pm$ 0.0 & 95.3 $\pm$ 1.9 & 98.0 $\pm$ 0.4 \\
151 & 30.3 $\pm$ 2.6 & 100.0 $\pm$ 0.0 & 92.4 $\pm$ 1.3 & 94.5 $\pm$ 1.1 \\
\bottomrule
\end{tabular}
\end{table}

\subsection{Ablations}

The last block of Table~\ref{tab:main} varies one choice of Section~\ref{sec:fast} at a time. More dominators shrink the set with diminishing returns and rising cost, from 18.8\% and 1.8 milliseconds per query with one dominator on METR-LA to 12.0\% and 6.4 milliseconds with ten, while strict certification does not move at all, because it is decided by the first dominator. The nominal completion barely matters as long as it is sensible, since linear interpolation, last observation carried forward, historical mean and kNN give median sets within 1.4 points of each other on the first two datasets, while the midpoint of the box, which sits far from any real speed, gives 18.1\% and 33.8\%. The witnesses are complementary. Among all pair tests made while building the candidate sets at $\tau=47$, a population dominated by easy pairs unlike the boundary pairs above, the bound leaves about half undecided, and of those $\xup$ alone leaves the programme to 28.4\%, 14.7\% and 17.6\%, the four fixed completions alone to 7.3\%, 3.9\% and 6.0\%, and all five together to 4.6\%, 2.5\% and 3.6\%. At $\tau=8$ the order of the two witness groups reverses, which is why Algorithm~\ref{alg:pair} keeps both.

\section{Limitations and Conclusion}\label{sec:limitations}

We give a complete decision procedure for strict pairwise stability under window-minimum distances, combining a finite critical-value reduction, an exact interval dynamic programme and a quadratic-time fast layer. Pairwise decisions yield sound candidate sets and sufficient label-invariance certificates.

The guarantees hold over the specified box and apply to the true query only when it lies inside, which held for all but 0.3\% of the PEMS-BAY queries. The framework requires a complete gallery and fixed time positions. Broad boxes limit certification coverage, and tighter boxes impose stronger assumptions on the missing values. The programme takes $O(n^{3}m)$ time with $m\le 4n$, and its practicality is shown only at the tested window lengths and gallery sizes. The main retrieval experiments use traffic-speed data, and the three label benchmarks provide limited evidence beyond that setting. Future work includes incomplete galleries, window minima on trees or graphs, and retrieval by worst-case regret.

\bibliographystyle{IEEEtran}
\bibliography{refs}

\end{document}